\def\BibTeX{{\rm B\kern-.05em{\sc i\kern-.025em b}\kern-.08em
    T\kern-.1667em\lower.7ex\hbox{E}\kern-.125emX}}
\begin{document}

\title{Mera: Memory Reduction and Acceleration for Quantum Circuit Simulation via Redundancy Exploration \\
\thanks{This work was supported by the National Natural Science Foundation of China (Grant No. 62372182, 62372183).}
}

\author{
\IEEEauthorblockN{
Yuhong Song$^{1, 2}$ \quad
Edwin Hsing-Mean Sha$^{1}$ \quad
Longshan Xu$^{1}$ \quad
Qingfeng Zhuge$^{1,* \thanks{* Qingfeng Zhuge is the corresponding author}}$ \quad
Zili Shao$^{2}$
}
\\
\IEEEauthorblockA{
\normalsize
$^{1}$ School of Computer Science and Technology, East China Normal University, Shanghai 200062, China \\
$^{2}$ Department of Computer Science and Engineering, The Chinese University of Hong Kong, Hong Kong, China \quad
}
\IEEEauthorblockA{
\normalsize
Email: yhsong@stu.ecnu.edu.cn, edwinsha@cs.ecnu.edu.cn, lsxu@stu.ecnu.edu.cn, \\qfzhuge@cs.ecnu.edu.cn, shao@cse.cuhk.edu.hk}
}


\maketitle

\begin{abstract}
With the development of quantum computing, quantum processor demonstrates the potential supremacy in specific applications, such as Grover's database search and popular quantum neural networks (QNNs). 
For better calibrating the quantum algorithms and machines, quantum circuit simulation on classical computers becomes crucial.
However, as the number of quantum bits (qubits) increases, the memory requirement grows exponentially.
In order to reduce memory usage and accelerate simulation, we propose a multi-level optimization, namely Mera, by exploring memory and computation redundancy.
First, for a large number of sparse quantum gates, we propose two compressed structures for low-level full-state simulation.
The corresponding gate operations are designed for practical implementations, which are relieved from the long-time compression and decompression.
Second, for the dense Hadamard gate, which is definitely used to construct the superposition, we design a customized structure for significant memory saving as a regularity-oriented simulation.
Meanwhile, an on-demand amplitude updating process is optimized for execution acceleration.
Experiments show that our compressed structures increase the number of qubits from 17 to 35, and achieve up to 6.9$\times$ acceleration for QNN.
\end{abstract}

\begin{IEEEkeywords}
Quantum circuit simulation, memory reduction, acceleration, redundancy exploration, algorithm optimization
\end{IEEEkeywords}

\section{Introduction}
Noisy intermediate-scale quantum (NISQ) machines have been proved their quantum supremacy \cite{boixo2018characterizing}.
Benefiting from the superposition and entanglement of quantum bit (qubit), quantum computers demonstrate huge advantages over classical computers in some specific applications, like Shor's integer factorization \cite{shor1999polynomial}, Grover's database search \cite{grover1996fast}, and quantum neural networks (QNNs) \cite{jiang2021co,hu2022quantum}. 
Because of the immaturity and huge overhead of existing physical quantum processors, simulation on classical computers becomes crucial for better understanding quantum behaviors.
Among different types of quantum simulators \cite{simulator_list}, full-state simulation \cite{de2007massively} becomes important, because it allows deeper and larger quantum circuit simulation.
It updates the state-vector in each time step.

However, as the number of simulated qubits increases, classical computers cannot afford the huge memory requirements.
Because the full-state quantum circuit simulation relies on matrix tensor-product (MTP) operation, which produces exponential memory increment on operation-matrix and state-vector.
An $n$-qubit quantum circuit produces an operation-matrix and a state-vector with the size of $2^n \times 2^n$ and $2^n$ in each time step.
Each element is represented using double-precision complex numbers. 
As a result, the operation-matrix requires $2^{2n+4}$ bytes and the state-vector needs $2^{n+4}$ bytes to store.
As the number of qubits increases, the memory undoubtedly becomes the bottleneck of quantum circuit simulation.

Based on this, there are lots of works \cite{haner20175,li2019quantum} that rely on huge-memory supercomputers to mitigate the memory bottleneck.
However, even supercomputers still exist memory caps that can’t keep up with exponential memory growth. 
Without any optimization, the state-of-the-art Fugaku supercomputer, which possesses 4.9 PB memory\cite{top500}, can only accommodate 24 qubits.
Meanwhile, the hardware and software platforms of supercomputers are not easy to access, generous software debugging for quantum algorithms design will lead to heavy overhead.
Other works \cite{smelyanskiy2016qhipster,khammassi2017qx} focus on compilation-level optimizations and parallel computing in distributed systems to mitigate the memory problem and accelerate computation. 
But there is still plenty of memory and computation waste that can be explored and removed.
In this work, we focus on redundancy exploration and algorithm optimizations on modern workstations to reduce memory usage and accelerate quantum circuit simulation.

First, because the quantum circuit simulation depends on the gate MTP operation, we explore forty commonly used quantum gates.
There is a great percentage of sparse gates with duplicated zero values, which bring huge potential for memory saving.
Aiming at the sparse quantum gates, we design two new compressed structures, namely DAX and DAS, for low-level full-state simulation.
We encode the gate from scratch and design corresponding MTP and matrix-vector multiplication (MVM) calculations, so it doesn't require long-time compression and decompression processes.
However, the Hadamard gate, which is definitely used to construct the quantum superposition, is a dense matrix without any zero elements that can be compressed.
Therefore, we especially explore the regular patterns of the Hadamard gate and propose a novel customized structure for a regularity-oriented simulation.
Meanwhile, the corresponding MVM operation for the next state-vector acquisition is optimized for efficient simulation.

The main contributions of this paper are listed as follows.
\begin{itemize}
    \item First, we explore the sparsity of quantum gates and propose two new compressed structures to encode the quantum gates for a low-level simulation. 
    The corresponding MTP and MVM operations are designed for acceleration by abandoning the long-time compression and decompression.
    \item For the dense Hadamard gate, we explore the computation redundancies and design a novel customized structure for regularity-oriented simulation.
    The MVM computation is optimized by parsing the structure on demand.
    \item Experiments show that the compressed structures can increase the number of qubits from 17 to 34.
    Moreover, it achieves up to 1800.3$\times$ and 141.2$\times$ speedup for MTP and MVM operations.
    For dense Hardamard gates, our method achieves up to 6.9$\times$ and 4.6$\times$ simulation acceleration for QNN and Grover's algorithms.
\end{itemize}


\section{Background, Related Works and Motivations} \label{sec:BRM}

\subsection{Basics of quantum circuit simulation}
Given a quantum circuit, each qubit is a two-level quantum system.
Its state $|\psi\rangle$ can be expressed by two complex number $\alpha_0$ and $\alpha_1$ as
\begin{equation}
\label{eqa:qubit}
    |\psi\rangle = \alpha_0 |0\rangle + \alpha_1 |1\rangle,
\end{equation}
where $|\alpha_0|^2 + |\alpha_1|^2 = 1$.
$|0\rangle$ and $|1\rangle$ are two computational orthonormal basic states.
The one-qubit state $|\psi\rangle$ can also be represented using two orthonormal vectors as
\begin{equation}
\begin{small}
|\psi\rangle = \alpha_0
    \left[
    \begin{matrix}
        1 \\
        0 \\
    \end{matrix}
    \right]
  + \alpha_1 
    \left[
    \begin{matrix}
        0 \\
        1 \\
    \end{matrix}
    \right]
  = \left[
    \begin{matrix}
        \alpha_0 \\
        \alpha_1 \\
  \end{matrix}
  \right],
\end{small}
\end{equation}
where {\footnotesize $\left[
    \begin{matrix}
        1 \\
        0 \\
    \end{matrix}
    \right]$}
  and {\footnotesize $\left[
        \begin{matrix}
        0 \\
        1 \\
        \end{matrix}
        \right]$}
are the vector representation of $|0\rangle$ and $|1\rangle$, respectively.
This vector is named as state-vector.
The representation means that the qubit is in a superposition of {\small $|0\rangle$} and {\small $|1\rangle$}.
The square of each element (i.e., amplitude) refers to the probability of the qubit collapsing into the specific state.
When the qubit is measured, it has $p = |\alpha_0|^2$ probability of collapsing to state $|0\rangle$ and $1 - p = |\alpha_1|^2$ probability of collapsing to state $|1\rangle$.

Given an $n$-qubit quantum circuit, there are $2^n$ amplitudes in the state-vector.
More generally, the state-vector can be represented as 
\begin{equation} \label{equa:sum1}
\begin{aligned}
    |\phi\rangle = 
    \alpha_{00..0} |00..0\rangle +
    ... +
    \alpha_{11..1} |11..1\rangle =
    \left[
        \begin{matrix}
            \alpha_{00..0} \\
            \alpha_{00..1} \\
            ... \\
            \alpha_{11..1}
        \end{matrix}
    \right].
\end{aligned}
\end{equation}
The squared amplitudes have to sum up to 1, that is
\begin{equation}
    \sum_{i=0}^{N-1}|\alpha_i|^2 = 1,
\end{equation}
where the $N = 2^n$.
When the $n$-qubit quantum circuit is measured, it has $|\alpha_{00..0}|^2$ probability of collapsing to state $|00..0\rangle$, etc. 

In quantum circuit simulation, quantum gates are applied to implement the circuit transformation.
Any quantum gate is represented as a complex unitary matrix.
Given an $n$-qubit quantum circuit, if we apply a single-qubit gate $G$ into the $k$th qubit, the transformation can be represented as a $2^n \times 2^n$ unitary matrix as 
\begin{equation}
    U = I^{\otimes k-1} \otimes G \otimes I^{\otimes n-k},
\end{equation}
where $\otimes$ refers to the matrix tensor-product operation, $I$ represents the $2 \times 2$ identity matrix, and $G$ identifies a $2 \times 2$ unitary gate matrix. 
After adding the gate $G$, the state-vector $|\phi\rangle$ can be transformed to
\begin{equation}
    U|\phi\rangle = U \left[
        \begin{matrix}
            \alpha_{00..0} \\
            \alpha_{00..1} \\
            ... \\
            \alpha_{11..1}
        \end{matrix}
    \right],
\end{equation}
which is conducted using the matrix-vector multiplication operation to obtain the state-vector after $U$.

\subsection{Related Works} 
\textbf{quantum circuit simulation.}
As the quantum simulators develop, there are different types of methods \cite{bernstein1997quantum,boixo2017simulation, de2007massively} for time-space trade-off. 
Feynman path simulation \cite{bernstein1997quantum} calculates all the amplitudes by following all the paths from the final states to the initial states.
Both the memory usage and simulation time grow exponentially as the circuit depth increases.
Tensor network contraction \cite{boixo2017simulation,pednault2017breaking} utilizes tensor network to represent quantum circuits \cite{mccaskey2018validating}. 
However, its space overhead and execution time are also exponential as the tree-width of underlying graphs increases.
Based on this, some approaches \cite{chen2018classical,chen201864} compute only one amplitude or a part of the state-vector to mitigate the memory bottleneck.
However, it can't effectively support the intermediate measurement \cite{barends2014superconducting} and full-state assertion checking \cite{huang2019statistical} for software debugging.
For the full-state simulation, the memory requirement grows exponentially as the number of qubits increases.
But the simulation time is polynomial as the circuit depth goes deeper, which enables deeper and larger quantum circuits.
Meanwhile, it can suitably support software debugging.

However, the memory boundary is still the bottleneck for full-state simulation.
Lots of works rely on supercomputers to mitigate the memory problem \cite{haner20175,li2019quantum}, but the hardware and software platforms of supercomputers are not easily accessed by anyone.
Others \cite{smelyanskiy2016qhipster, khammassi2017qx} rely on complication-level optimizations and parallel computing in distributed systems to solve memory problems and accelerate simulation, but there are still huge memory and computation redundancies.
In this paper, we focus on the high-level data structure and algorithms optimizations on personal computers to reduce data redundancies and accelerate the simulation.

\begin{table*}[t]
\captionsetup{font={small}}
\setlength{\abovecaptionskip}{3pt}
\caption{Conclusion of forty commonly used quantum gates. The `Shape' refers to the shape of the gate matrix. `Ratio' identifies the zero ratio of a gate matrix. `Sparsity (n)' refers to the sparsity for identical gate MTP in an $n$-qubit circuit. `Mem Imprv.' means the improvement of memory saving using our compressed structure.}
\def\arraystretch{1.0}\tabcolsep 2.5pt
\def\thefootnote{a}\footnotesize
\begin{center}
\begin{tabular}{|c|c|c|c|c||c|c|c|c|c|}
\hline
\textbf{Number} & \textbf{Gate} & \textbf{Shape/Ratio} & \textbf{Sparsity (n)} & \textbf{Mem imprv.} & \textbf{Number} & \textbf{Gate} & \textbf{Shape/Ratio} & \textbf{Sparsity (n)} & \textbf{Mem imprv.} \\
\hline
\hline
1     & Hadamard & \multicolumn{1}{c|}{\multirow{8}[4]{*}{[2, 2] / 0\%}} & \multicolumn{1}{c|}{\multirow{8}[4]{*}{0}} & \multirow{8}[4]{*}{-} & 21    & RXX   & \multicolumn{1}{c|}{\multirow{4}[2]{*}{[4, 4] / 50\%}} & \multicolumn{1}{c|}{\multirow{4}[2]{*}{$1 - \frac{1}{1.4^n }$}} & \multicolumn{1}{c|}{\multirow{4}[2]{*}{$\frac{2}{3}*2^{\frac{n}{2}} \times$}} \\
2     & square root of Pauli-X &       &       &       & 22    & RYY   &       &       &  \\
3     & SXdg  &       &       &       & 23    & RZX   &       &       &  \\
4     & R     &       &       &       & 24    & ECR   &       &       &  \\
\cline{6-10}5     & RX    &       &       &       & 25    & CH    & \multicolumn{1}{c|}{\multirow{6}[4]{*}{[4, 4] / 62.5\%}} & \multicolumn{1}{c|}{\multirow{6}[4]{*}{$1 - \frac{1}{1.6^n} $}} & \multicolumn{1}{c|}{\multirow{6}[4]{*}{$\frac{2}{3}*\frac{8}{3}^{\frac{n}{2}} \times$}} \\
6     & RY    &       &       &       & 26    & CSX   &       &       &  \\
7     & U     &       &       &       & 27    & CH    &       &       &  \\
8     & U2    &       &       &       & 28    & CSX   &       &       &  \\
\cline{1-5}9     & Identity & \multicolumn{1}{c|}{\multirow{10}[4]{*}{[2, 2] / 50.00\%}} & \multicolumn{1}{c|}{\multirow{10}[4]{*}{$1 - \frac{1}{2^n}$}} & \multicolumn{1}{c||}{\multirow{10}[4]{*}{$\frac{2}{3}*2^n \times$}} & 29    & CU    &       &       &  \\
10    & Pauli-X(NOT) &       &       &       & 30    & CU3   &       &       &  \\
\cline{6-10}11    & Pauli-Y &       &       &       & 31    & RZZ   & \multicolumn{1}{c|}{\multirow{10}[4]{*}{[4, 4] / 75\%}} & \multicolumn{1}{c|}{\multirow{10}[4]{*}{$1 - \frac{1}{2^n}$}} & \multicolumn{1}{c|}{\multirow{10}[4]{*}{$\frac{2}{3}*2^n \times$}} \\
12    & Pauli-Z &       &       &       & 32    & Swap  &       &       &  \\
13    & S     &       &       &       & 33    & iSwap  &       &       &  \\
14    & T     &       &       &       & 34    & Controlled-X &       &       &  \\
15    & T-adjoint  &       &       &       & 35    & Controlled-Y &       &       &  \\
16    & Phase  &       &       &       & 36    & Controlled-Z &       &       &  \\
17    & RZ    &       &       &       & 37    & DCX   &       &       &  \\
18    & U1    &       &       &       & 38    & CPhase  &       &       &  \\
\cline{1-5}19    & CCX (Toffoli) & \multicolumn{1}{c|}{\multirow{2}[2]{*}{[8, 8] / 87.5\%}} & \multicolumn{1}{c|}{\multirow{2}[2]{*}{$1 - \frac{1}{2^n} $}} & \multicolumn{1}{c||}{\multirow{2}[2]{*}{$\frac{2}{3}*2^n \times$}} & 39    & CRZ   &       &       &  \\
20    & CSwap  &       &       &       & 40    & CU1   &       &       &  \\
\hline
\end{tabular}%
\end{center}
\label{tab:gate_spar}%
\end{table*}


\textbf{Data Compression.}
In order to accommodate more qubits on classical computers, data compression methods are commonly used to save memory usage \cite{wu2019full}.
For data compression techniques, works mainly adopt two types of compressors: lossless compressors and error-bounded lossy compressors.
The lossless methods \cite{huffman1952method,witten1987arithmetic} maintain complete original information, so the compressed data can be decoded without any loss.
Relatively, the lossless methods possess low compression ratio.
On the contrary, the lossy compressors \cite{liang2018error,sasaki2015exploration} achieve high compression ratio but lose original information, which makes the inaccuracy problem in quantum circuit simulation.
Whether the lossless compression or lossy compression methods, they all store the compressed data in advance.
During the calculation, the compressed data are first decompressed.
After computation, the results are re-compressed and stored in memory.
The compression and decompression time is long relative to the overall computation time.
Although memory usage is reduced, simulation time is extended.
In this paper, we design data structure and corresponding algorithms to accelerate the quantum circuit simulation without any compression and decompression time.

\subsection{Motivations}\label{sec:challenges}

\textbf{The sparsity of a huge fraction of quantum gates offers the optimization potential.}
Because the full-state quantum circuit simulation relies on the MTP and MVM of gate matrices and state-vector, we first explore the commonly used quantum gates and conclude forty gate matrices.
We observe that there are lots of zero elements in the gate matrices.
The sparsity in the quantum gates is ubiquitous. 
We demonstrate the forty gates, including single-qubit gates and multiple-qubit gates, and their sparsity ratio in Table \ref{tab:gate_spar}.
According to our statistics, 80\% of quantum gates possess a zero ratio over 50\%, and there are 30\% of gates possessing over 75\% zero elements, which makes huge memory waste.
The huge sparsity of quantum gates provides optimization potentials.

Meanwhile, we find that, for a quantum circuit with random gates applying, the sparsity of the operation-matrix after MTP will not be lower than the maximum sparsity ratio among the used gates.
The theorem is proved in the Theorem \ref{them:decrease}. 
That's to say, the results of MTP are highly sparse the same as the used gates.
In order to provide the Theorem \ref{them:decrease}, the Theorem \ref{them:sparsity_ratio} needs to be brought up first.
\newtheorem{theorem}{\textbf{Theorem}}
\begin{theorem}\label{them:sparsity_ratio}
Let $R(A)$ and $R(B)$ be the zero ratios of matrix $A$ and $B$.
Thus, the sparsity of the $A$ and $B$ MTP result is
\begin{equation}\label{equa:AtpB}
    R(A\otimes B) = R(A) + (1 - R(A)) \cdot R(B).
\end{equation}
\end{theorem}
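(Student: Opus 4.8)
The plan is to work directly from the entrywise description of the Kronecker product and reduce the claim to a counting argument. Write $A$ as an $m \times n$ matrix and $B$ as a $p \times q$ matrix, so $A \otimes B$ is $mp \times nq$ with entries $(A \otimes B)_{(i,k),(j,l)} = A_{ij}\, B_{kl}$ for $1 \le i \le m$, $1 \le j \le n$, $1 \le k \le p$, $1 \le l \le q$. The point to emphasize is that each entry of $A \otimes B$ is a single product of one entry of $A$ and one entry of $B$; no sums occur, so there is no cancellation to track. Next I would characterize exactly when such an entry vanishes: because gate matrices have complex entries and $\mathbb{C}$ has no zero divisors, $A_{ij}B_{kl} = 0$ if and only if $A_{ij} = 0$ or $B_{kl} = 0$.

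With that equivalence in hand, the count is routine. Let $Z_A$ and $Z_B$ denote the numbers of zero entries of $A$ and $B$, so $R(A) = Z_A/(mn)$ and $R(B) = Z_B/(pq)$. I would partition the $mn \cdot pq$ index quadruples $(i,j,k,l)$ according to whether $A_{ij} = 0$: there are $Z_A \cdot pq$ quadruples with $A_{ij} = 0$, each giving a zero entry of $A \otimes B$ automatically, and a further $(mn - Z_A)\cdot Z_B$ quadruples with $A_{ij} \ne 0$ but $B_{kl} = 0$. These two families are disjoint and, by the equivalence above, together account for precisely the zero entries of $A \otimes B$. Hence the number of zeros of $A \otimes B$ equals $Z_A\,pq + (mn - Z_A)Z_B$. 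Dividing by the total entry count $mn \cdot pq$ yields
\begin{equation}
R(A \otimes B) = \frac{Z_A}{mn} + \frac{mn - Z_A}{mn}\cdot\frac{Z_B}{pq} = R(A) + \bigl(1 - R(A)\bigr)\, R(B),
\end{equation}
which is exactly \eqref{equa:AtpB}.

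I do not expect a genuine obstacle here: the only substantive ingredient beyond bookkeeping is the no-zero-divisor property of the scalar field, which is what upgrades a naive lower bound on the zero count into the exact identity. If one wanted to be fully careful, the place to be watchful is precisely the claim that a product of two nonzero complex entries is again nonzero — everything else is a direct partition-and-count over the index set of the Kronecker product.
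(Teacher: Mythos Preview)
Your argument is correct and follows essentially the same approach as the paper: observe that each entry of $A\otimes B$ is a single product $A_{ij}B_{kl}$, hence vanishes iff $A_{ij}=0$ or $B_{kl}=0$, and then partition the zero entries into the two disjoint cases ``$a$ is zero'' and ``$a$ nonzero but $b$ is zero'' to obtain $R(A)+(1-R(A))R(B)$. Your write-up is more detailed (explicit indexing, the no-zero-divisor remark, and the exact count $Z_A\,pq+(mn-Z_A)Z_B$), but the underlying idea is identical.
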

As the MTP proceeds, two elements $a$ and $b$, which are from $A$ and $B$ respectively, conduct multiplication for the corresponding location in the result.
If either $a$ or $b$ is zero, then the corresponding element becomes zero in $A \otimes B$.
Therefore, the sparsity ratio of $A \otimes B$ includes two cases: element $a$ is zero, or element $b$ is zero.

\newtheorem{corollary}{\textbf{Corollary}}
\begin{corollary}\label{col:AtpB}
Let $R(A)$ and $R(B)$ be the zero ratio of matrix $A$ and $B$, we have
\begin{equation}\label{equa:AtpBtpA}
    R(A\otimes B) = R(B\otimes A).
\end{equation}
\end{corollary}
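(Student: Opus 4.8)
The plan is to derive this directly from Theorem~\ref{them:sparsity_ratio}. Applying that identity once to $A \otimes B$ gives $R(A \otimes B) = R(A) + (1 - R(A)) \cdot R(B)$, and applying it again with the roles of the two matrices swapped gives $R(B \otimes A) = R(B) + (1 - R(B)) \cdot R(A)$. Expanding both right-hand sides yields the common expression $R(A) + R(B) - R(A) R(B)$, so the two zero ratios coincide. This is the whole argument at the level of the formula; the symmetry is already latent in the statement of Theorem~\ref{them:sparsity_ratio}.

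For robustness I would also record a combinatorial justification that does not lean on the algebraic identity. The total number of entries of $A \otimes B$ equals $(\dim A)^2 (\dim B)^2$, which is the same as the number of entries of $B \otimes A$, so it suffices to compare zero counts. Every entry of $A \otimes B$ is a product $a_{ij} b_{k\ell}$ of one entry of $A$ with one entry of $B$, and each such ordered pair of entries contributes exactly one entry of $A \otimes B$; the same bijection with ordered pairs holds for $B \otimes A$. Hence $A \otimes B$ and $B \otimes A$ carry the same multiset of entries (indeed they are related by the perfect-shuffle permutation of rows and columns, hence permutation-similar), so in particular they have the same number of zeros, and dividing by the common entry count gives $R(A \otimes B) = R(B \otimes A)$.

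The steps, in order, are: (i) invoke Theorem~\ref{them:sparsity_ratio} for both orderings; (ii) expand and observe the manifestly symmetric form $R(A) + R(B) - R(A) R(B)$; and, for the alternative route, (i$'$) note that $\dim(A \otimes B) = \dim(B \otimes A)$ so the denominators agree, (ii$'$) exhibit the entrywise-product bijection showing the two matrices share the same multiset of entries, and (iii$'$) conclude the zero counts, and hence the ratios, are equal. There is essentially no obstacle: the only point needing a moment's attention is confirming that the two zero ratios are taken over the same total number of entries, which is immediate from $(\dim A)^2(\dim B)^2 = (\dim B)^2(\dim A)^2$.
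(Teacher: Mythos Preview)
Your argument is correct and matches the paper's own justification: both simply apply Theorem~\ref{them:sparsity_ratio} to each ordering and observe that the resulting expressions coincide (your expansion to the symmetric form $R(A)+R(B)-R(A)R(B)$ makes this explicit). The additional combinatorial/permutation-similarity route you sketch is sound but goes beyond what the paper records.
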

Similarly, according to the Theorem \ref{them:sparsity_ratio}, we can compute the $R(B\otimes A)$, which is equal to $R(A\otimes B)$.

\begin{theorem} \label{them:decrease}
    Let $R(A)$ and $R(B)$ be the zero ratios of matrix $A$ and $B$.
The sparsity ratio of the MTP of $A$ and $B$ $R(A \otimes B) \geq max(R(A), R(B))$.
\end{theorem}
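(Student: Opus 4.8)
The plan is to obtain the inequality as an immediate algebraic consequence of the closed form in Theorem~\ref{them:sparsity_ratio}, so that no new combinatorial argument about the tensor product is needed. First I would record the elementary but essential fact that $R(A)$ and $R(B)$ both lie in the interval $[0,1]$, being fractions of zero-valued entries of a matrix; this is precisely what makes all the sign estimates below valid.

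Next I would split the claim into the two separate bounds $R(A\otimes B)\geq R(A)$ and $R(A\otimes B)\geq R(B)$, since $\max(R(A),R(B))$ is dominated once each of its arguments is dominated. For the first, I substitute Equation~\eqref{equa:AtpB} and compute
\begin{equation}
R(A\otimes B) - R(A) = (1-R(A))\cdot R(B) \geq 0,
\end{equation}
where nonnegativity follows from $R(A)\leq 1$ and $R(B)\geq 0$. For the second bound I would first rewrite the right-hand side of Equation~\eqref{equa:AtpB} in the symmetric form $R(A)+R(B)-R(A)\cdot R(B)$ and then compute
\begin{equation}
R(A\otimes B) - R(B) = R(A)\cdot(1-R(B)) \geq 0,
\end{equation}
again using $R(A)\geq 0$ and $R(B)\leq 1$. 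Alternatively, the second bound follows from the first combined with Corollary~\ref{col:AtpB}, since $R(A\otimes B)=R(B\otimes A)\geq R(B)$. Putting the two inequalities together gives $R(A\otimes B)\geq\max(R(A),R(B))$.

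I do not expect a real obstacle here once Theorem~\ref{them:sparsity_ratio} is available; the only point that needs any care is the domain restriction $R(A),R(B)\in[0,1]$, which is exactly what keeps the product terms from changing sign. As a closing remark I would note that equality $R(A\otimes B)=\max(R(A),R(B))$ holds precisely when one of the matrices has zero ratio $0$ or $1$ (a fully dense or an all-zero matrix), which matches the intuition that tensoring with a dense gate cannot dilute sparsity and, by induction over a sequence of MTP operations, explains why the operation-matrix never becomes less sparse than the sparsest gate used.
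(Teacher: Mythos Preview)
Your proposal is correct and follows essentially the same route as the paper: both use the closed form from Theorem~\ref{them:sparsity_ratio} together with $R(A),R(B)\in[0,1]$ to get $R(A\otimes B)\geq R(A)$, and then obtain the companion inequality $R(A\otimes B)\geq R(B)$ (the paper does this via Corollary~\ref{col:AtpB}, which you list as your alternative). Your direct symmetric rewriting for the second bound and the remark on the equality case are small additions, but the core argument is the same.
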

\newtheorem{proof}{\textbf{Proof}}
\begin{proof}
Because $0 \leq R(A), R(B) \leq 1$, we have $1 - R(A) \geq 0$.
Then we have $(1 - R(A)) * R(B) \geq 0$. 
Therefore, $R(A) + (1 - R(A)) * R(B) \geq R(A)$ is established. 
That's, $R(A \otimes B) \geq R(A)$. 
Similarly, $R(B \otimes A) \geq R(B)$ is proved.
According to Corollary \ref{col:AtpB}, we know $R(A \otimes B) = R(B \otimes A)$.
Therefore, $R(A \otimes B) \geq max(R(A), R(B))$ is proved.
\end{proof}

From Theorem \ref{them:decrease}, we conclude that the sparsity ratio of the result can't decrease as the MTP calculation proceeds.
For more sparse gates applying, MTP will produce higher sparsity.

\newtheorem{property}{\textbf{Property}}
\begin{property}\label{pro:AtpB}
Let $R(A)$ be the zero ratio of matrix $A$, the sparsity ratio of the MTP of $m$ identical $A$
\begin{equation}\label{equa:Atpm}
    R(A^{\otimes m}) = 1-(1-R(A))^m
\end{equation}
\end{property}
Above equation is generalized based on the Theorem \ref{them:sparsity_ratio}. The sparsity of $m$ identical quantum gates listed in Table \ref{tab:gate_spar} is shown as column `Sparsity (n)', where the $n = m$ for single-qubit gates, $n = 2m$ for two-qubit gates, $n = 3m$ for three-qubit gates.
For the sparse quantum gates, there are up to $2^n \times$ sparsity for an $n$-qubit circuit. 
As $n$ increases, the sparsity increases significantly.


\section{Design}\label{sec:design}
In order to accommodate more qubits and accelerate quantum circuit simulation on computers with personal computing capability, we propose multi-level optimization methods.
First, for a large part of sparse quantum gates, we propose two compressed structures, namely DAX and DAS, as a low-level simulation.
And the corresponding MTP and MVM operations are designed to accelerate the simulation (see Section \ref{sec:lls}).
Second, for the dense and definitely used Hadamard gate, we propose a special structure, namely RH, based on the computation regularity.
Corresponding MVM operation is optimized for computation speedup (see Section \ref{sec:ros}).

\subsection{Low-level Simulation}\label{sec:lls}

From Section \ref{sec:challenges}, we find that as long as the quantum circuit utilizes gates with high sparsity, the entire operation-matrix of the circuit will also be highly sparse.
Therefore, in this section, we manage to design compressed structures to represent the gate matrix to remove the duplicated zeros.
Moreover, we design the corresponding MTP operation using the compressed structures, which finally produce a compressed operation-matrix.
Our design is relieved from the long compression and decompression. 
Next, we introduce the compressed structures and corresponding computation.

\subsubsection{Compressed structures: DAX and DAS}

In this section, we introduce two structures, namely DAX and DAS,  for data compression.
Except for the non-zero elements, the extra index and distance between non-zero elements are respectively stored to losslessly represent a matrix.
Our DAX and DAS can conduct the MTP among the same structures, so they don't demand long-time compression and decompression.

DAX represents a matrix using non-zero elements and corresponding decimal index.
The complex non-zero data includes real and imaginary parts, which are represented using double-precision.
And the index in the matrix utilizes 64-bit integer to record.
A set of real part, imaginary part, and the index is marked as an entry.
We utilize Y $\otimes$ X transformation as an example to demonstrate the effect of our compressed structure in Figure \ref{fig:dax_and_das} (a).
The MTP of Pauli-Y and Pauli-X gates are shown as follows:
\begin{equation}
\begin{small}
    \nonumber
    Y \otimes X = 
    \left[
    \begin{matrix}
    0 & -i \\
    i & 0
    \end{matrix}
    \right] \otimes
    \left[
    \begin{matrix}
    0 & 1 \\
    1 & 0
    \end{matrix}
    \right] = 
    \left[
    \begin{matrix}
    0 & 0 & 0 & -i \\
    0 & 0 & -i & 0 \\ 
    0 & i & 0 & 0 \\
    i & 0 & 0 & 0
    \end{matrix}
    \right].
\end{small}
\end{equation}
In the $Y \otimes X$ matrix, the non-zero elements are at index 3, 6, 9 and 12 respectively.
Without optimization, the result of $Y \otimes X$ needs 16 * 16B = 256B to store the matrix.
In our DAX structure, we only need 4 * 24B = 96B, which brings 2.7$\times$ memory saving for $Y \otimes X$.

\begin{figure}[t]
\center
\includegraphics[width=0.85\linewidth]{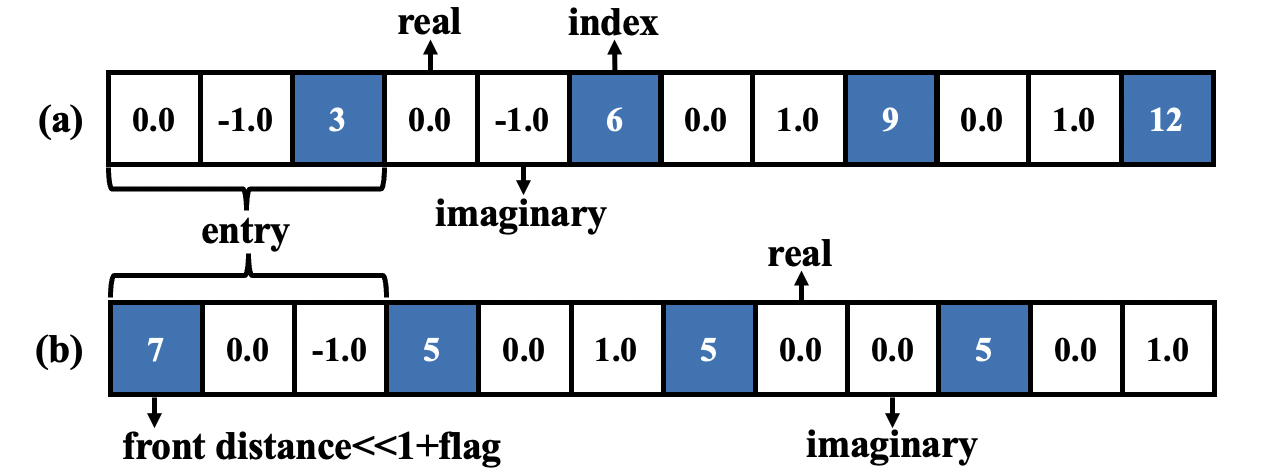}
\captionsetup{font={small}}
\setlength{\abovecaptionskip}{3pt}
\caption{Two compressed structures: (a) DAX and (b) DAS. Utilize Pauli-Y $\otimes$ Pauli-X gates as an example.}
\label{fig:dax_and_das}
\end{figure}

DAS transforms the matrix using non-zero elements and the distance (i.e., the number of zero elements) between two non-zero elements.
The structure is shown in Figure \ref{fig:dax_and_das} (b).
Similar to DAX, an entry of DAS structure includes the distance information, real and imaginary parts of the non-zero elements.
The distance information utilizes a 64-bit integer, 
where the first 63 bits are used to record the number of zeros in front of the non-zero elements and a flag bit is used to mark whether the current non-zero element is the last non-zero element of the matrix row.
Therefore, in the example of $Y \otimes X$, the matrix also needs 4 * 24B = 96B.
It's worth noting that when the matrix size increases, DAX and DAS will obtain more benefits.
We list the memory improvement of forty gates using DAX and DAS structure in Table \ref{tab:gate_spar} (column `Mem imprv.').
The improvement is calculated based on the MTP computation of $m$ identical gates for an $n$-qubit quantum circuit.
DAX and DAS can achieve up to $\frac{2}{3}*2^n \times$ memory saving.
The memory saving becomes larger and larger as $n$ increases.

\subsubsection{Corresponding MTP and MVM design of DAX and DAS}

In this section, we introduce the computation process of MTP and MVM using DAX and DAS structures.
For MTP computation, the original gate matrices are encoded using DAX or DAS.
Then corresponding information can be utilized to participate in computing. 
For later MVM computation, the MTP result can be seamlessly manipulated with current state-vector to produce the updated amplitudes.

For MTP computation, the results are also represented using the compressed structures without decompression and compression.
For the result, we need to compute for the number of non-zero elements, the value and position information of a non-zero element.
First, according to the MTP computation, the number of non-zero elements in results is the product of the number of non-zero elements in two inputs. 
For each non-zero element, the value is computed by the multiplication of two non-zero elements from the inputs.
Last but not least, the index or the distance information for DAX and DAS structures is determined by corresponding information of inputs.


\begin{figure}[t]
\center
\includegraphics[width=0.9\linewidth]{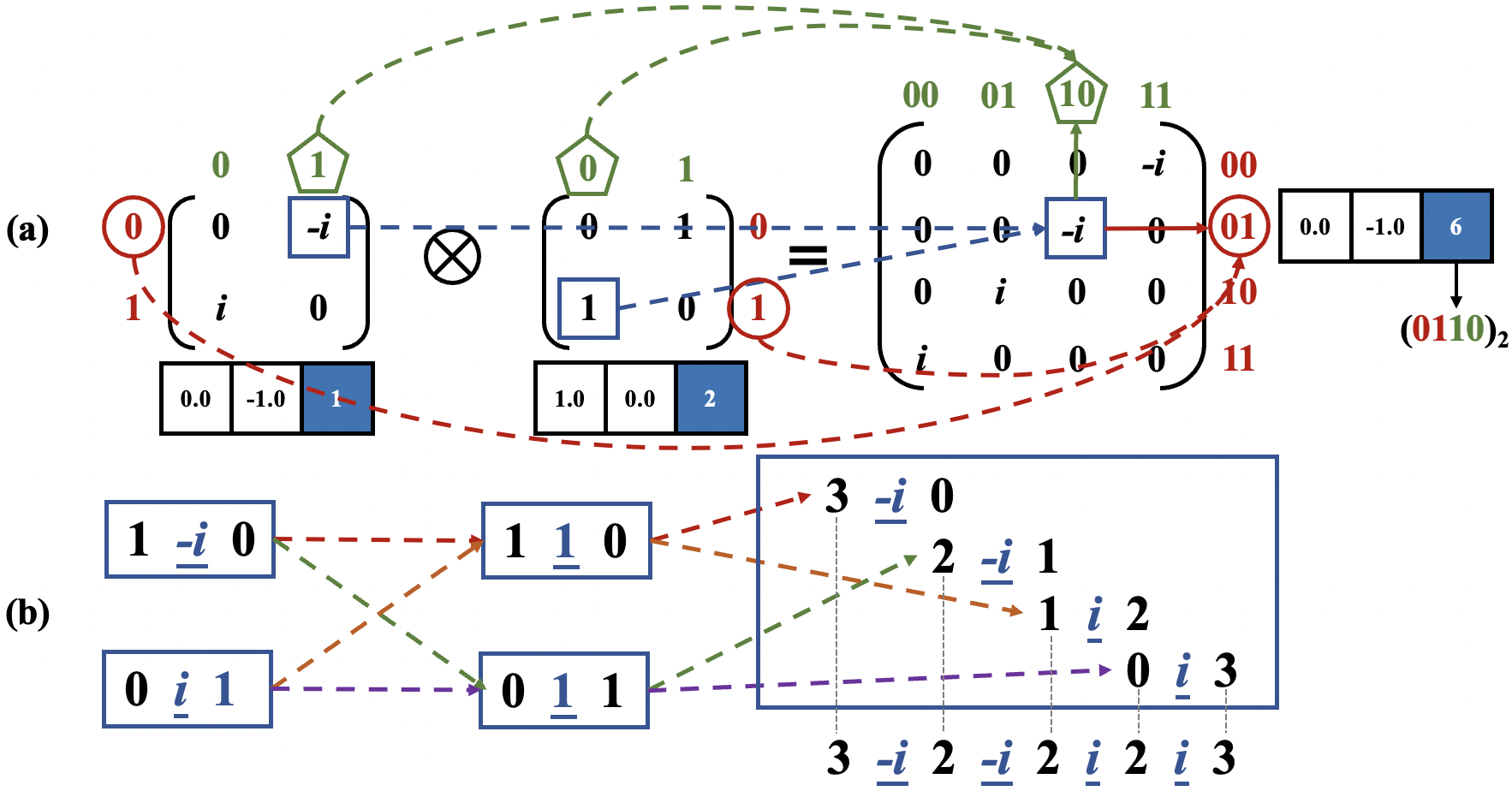}
\captionsetup{font={small}}
\setlength{\abovecaptionskip}{3pt}
\caption{MTP computation of (a) DAX and (b) DAS structures. Utilize Pauli-Y $\otimes$ Pauli-X as an example.}
\label{fig:mtp_dax_das}
\end{figure}

For the DAX structure, the row index and column index of the result are the binary concatenation of the row index and column index of two inputs.
The MTP computation of $Y \otimes X$ using DAX is demonstrated in Figure \ref{fig:mtp_dax_das} (a).
When executing $Y \otimes X$, multiplying the element in the $(r_y)_2$th row and $(c_y)_2$th column of $Y$ by the element in the $(r_x)_2$th row and $(c_x)_2$th column of $X$ equals the element in the $(r_yr_x)_2$th row and $(c_yc_x)_2$th column in result.
For example, in Figure \ref{fig:mtp_dax_das} (a), the non-zero element $-i$ represented by DAX $Y$ is in index $(1)_{10}$, which can be parsed to binary row index $(0)_2$ and column index $(1)_2$
Similarly, the non-zero element $1$ is located in the $(1)_2$th row and $(0)_2$th column of matrix $X$.
Then an entry of the result is computed.
The real and imaginary parts are computed based on the product of two real and imaginary parts of inputs, the results are 0.0 and -1.0 in this example.
And the index is equal to the binary concatenation of the row and column indices, the row index is $(01)_2$ and the column index is $(10)_2$.
Therefore, the decimal index of the corresponding result is $(0110)_2 = (6)_{10}$, which is filled to the index part.
Other non-zero elements are computed by iteratively accessing every non-zero element in the compressed inputs.

For DAS structure, we first parse the distance information of each row, which can obtain the exact number of zeros before and behind the non-zero element within the row.
For example, in Figure \ref{fig:mtp_dax_das} (b), the non-zero element $-i$ has one zero before it and no zero behind it.
Meanwhile, the $-i$ is the last non-zero element within the row.
Based on these distances, the distance of the result in each row can be calculated.
Then we sum up the resulting distances end to end to obtain the accurate compressed result.
For tensor-product of two rows, there are three cases to compute the distance $dis$ before each non-zero element.
\begin{equation}
\begin{small}
    dis = \left\{
    \renewcommand\arraystretch{1.2}
    \begin{array}{ll}
    A_i.dis * B.col + B_i.dis & {case1}\\
    A_i.dis * B.col + B_i.dis + B\_last\_dis  & {case2}\\
    B_i.dis & {otherwise}\\
    \end{array} \right.
\end{small}
\end{equation}
\emph{case1:} $A_i$ and $B_i$ are both the first non-zero elements of the row $A$ and $B$.
\emph{case2:} $B_i$ is the first non-zero element of the row $B$ but $A_i$ is not the first non-zero element of row $A$.
Otherwise, the distance $dis = B_i.dis$.
In above formulas, the $A_i.dis$ is the distance before the $i$th non-zero element in row $A$, $B.col$ means the number of column of row $B$.
Besides, the $B\_last\_dis$ refers to the distance behind the last non-zero element within the row.

For the MVM, we seamlessly implement the multiplication of compressed DAX/DAS structures and the state-vector according to the index and distance information of non-zero elements.
For a given operation-matrix $M$ and state-vector $S$, $S_{next}[i]\ += M[i][j] * S[j]$, where $i$ and $j$ are the row and column indices of $M$ and $S$.
For the MVM result, the $i$ and $j$ need to be iteratively accessed once.
For the matrix $M$, only the non-zero elements have effects on the result $S_{next}$.
Therefore, we only need to access each non-zero element, which is represented using DAX or DAS, then parse the row index $r$ and column index $c$.
Afterwards, the $M[r][c]$ and $S[c]$ are multiplied, corresponding result is added to $S_{next}[r]$.

\subsection{Regularity-oriented Simulation} \label{sec:ros}

From Table \ref{tab:gate_spar}, we find that our DAX and DAS can obtain high compression ratio for sparse quantum gates.
The execution time will be significantly reduced, as shown in Section \ref{sec:exp_dax_das}.
However, there are still some dense gates like Hadamard gate, which is definitely used to construct the quantum superposition by adding $n$ Hadamard gates to the circuit.
Our proposed DAX/DAS structures can't work because there are no duplicated zeros to compress.
Therefore, we propose a new customized structure, namely RH, for Hadamard gate.
First, the special regularity of MTP is explored.
Furthermore, the MVM is conducted by parsing the RH on demand, which is optimized to $O(log_2N) = n$ for execution acceleration.


\subsubsection{RH structure}

With our observation, the MTP result of $n$ Hadamard gates possesses strong regularity, it produces a $2^n \times 2^n$ operation-matrix.
In this matrix, each element has the same absolute value, only the sign is different.
Moreover, the sign can be calculated based on the index of each element in operation-matrix $H^{\otimes n}$.
Therefore, there is no need to store the complete huge $2^n \times 2^n$ matrix.
We propose a new regularity-oriented structure, namely RH, for Hadamard gates.
The structure only stores the absolute value $\frac{1}{\sqrt{N}}$ for $n$-qubit Hadamard applying, where $N = 2^n$.
In this way, the memory usage is reduced from $2^n \times 2^n$ elements to only one element, which brings $2^{2n} \times$  memory saving.
When the $H^{\otimes n}$ matrix is required to participate in the MVM for next state-vector, the sign of each element is parsed.

\subsubsection{MVM acceleration with RH}

\begin{figure}[t]
\center
\includegraphics[width=0.9\linewidth]{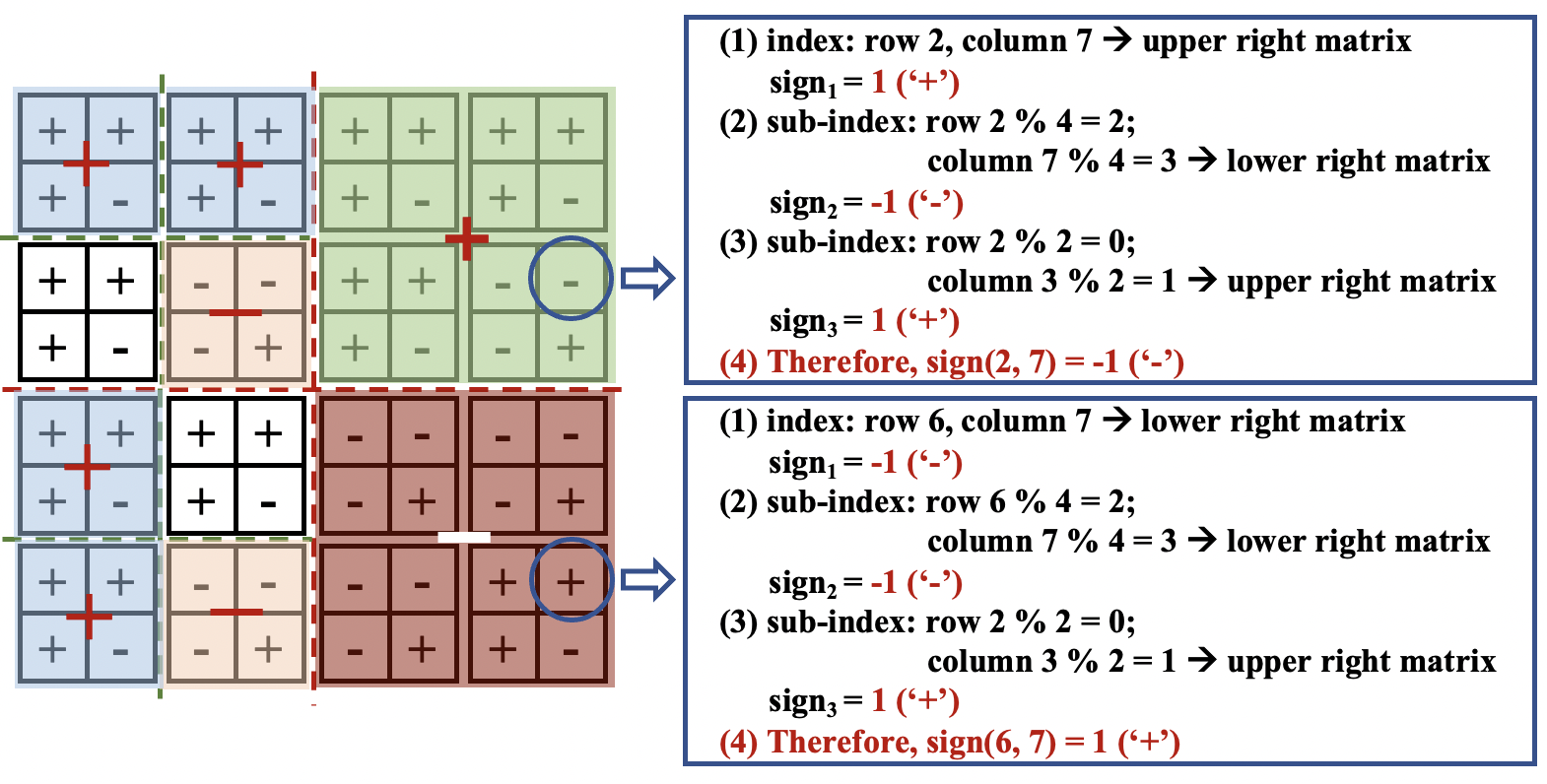}
\captionsetup{font={small}}
\setlength{\abovecaptionskip}{3pt}
\caption{The process of sign computation in matrix $H^{\otimes n}$ for a given index. Here, the number of qubits $n = 3$.}
\label{fig:h_sign}
\end{figure}

In this section, we propose three methods for MVM of $H^{\otimes n}$ and the state-vector $S$ for simulation acceleration.
When conducting MVM, each element of state-vector multiplies the same absolute value for each $H^{\otimes n}$ row.
The critical point is the sign of each element in $H^{\otimes n}$, which decides the final result of corresponding amplitude.
Given an index of an element in the $H^{\otimes n}$ matrix, we can compute the sign using $O(log_2N) = O(n)$ complexity.
We demonstrate the process of sign computation for matrix $H^{\otimes n}$ in Figure \ref{fig:h_sign}.

For matrix $H^{\otimes n}$, it can be divided into four equal sub-matrices, each of which owns an overall sign.
The overall sign of the upper left matrix, upper right matrix and lower left matrix are positive.
And the lower right matrix is negative.
Then we continue to divide a sub-matrix into four equal parts.
And the sign in the corresponding position is the same.
Given an index of an element, we can determine the relative position of the element in the sub-matrix layer by layer until the sub-matrix is no longer separable.
Then we multiply all the signs together to get the final sign of the element.
Here, we give two examples in Figure \ref{fig:h_sign}.

Given an index with row 2 and column 7, we can determine which sub-matrix the element is in.
Then the sub-sign is determined.
Finally, we compute the sign(2, 7) = -1.
Without optimization, for the MVM of $H^{\otimes n}$ and state-vector $S$, we first perform the scalar-multiply for the $S$ by $\frac{1}{\sqrt{N}}$.
In this way, the amplitudes don't need to multiply the same absolute value again and again.
Then, we calculate the sign of each element within the row.
Afterwards, the dot-product of this row and the state-vector is conducted to obtain an amplitude of the next state-vector.
For this non-optimized method, it requires $O(2^{2n})$ times sign computation.
In order to optimize the sign computation of $H^{\otimes n}$, we further propose three methods.

\begin{figure}[t]
\center
\includegraphics[width=0.9\linewidth]{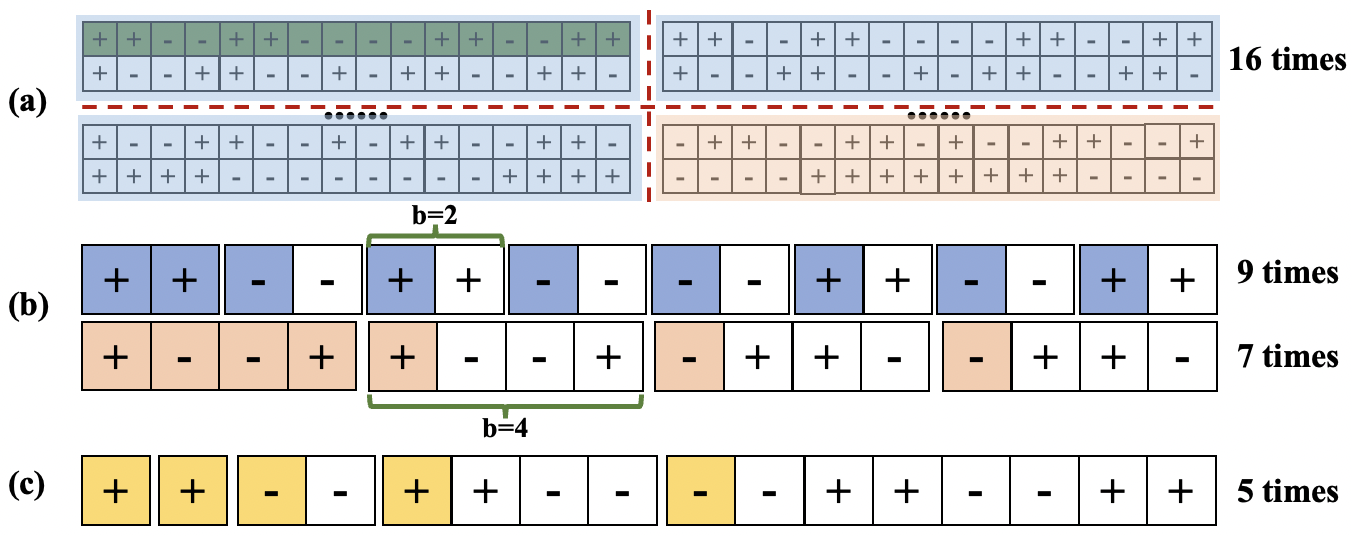}
\captionsetup{font={small}}
\setlength{\abovecaptionskip}{3pt}
\caption{Three sign computation methods of $H^{\otimes n}$. (a) quarter-based method; (b) block-based method; (c) logarithm-based method.
Here, the number of qubit $n$ = 5.}
\label{fig:mvm_sign}
\end{figure}

\textbf{\emph{Method 1: quarter-based method.}}
With our observation, the first half signs and the second half signs within a row of the operation-matrix $H^{\otimes n}$ have a certain correlation. 
For the rows from 0 to $2^{n-1} - 1$, the first half and second half signs in each row are the same.
And for the rows from $2^{n-1}$ to $2^n - 1$, the signs of two parts are opposite.
Fortunately, the first half signs from row 0 to $2^{n-1} - 1$ have the same sign distribution as the first half signs from row $2^{n-1}$ to $2^n - 1$.
Based on this regular pattern, only the quarter signs in the upper left sub-matrix need to be calculated.
Quarter-based method only requires $O(2^{2n-2})$ times sign computation, which brings 4$\times$ optimization than non-optimized method.
We give an example in Figure \ref{fig:mvm_sign} (a) to demonstrate the quarter-based method.
Only 16 rows of signs need to be computed, and for each row, the first 16 signs need to be computed.

\textbf{\emph{Method 2: block-based method.}}
Since the first half and the second half signs within each row have certain patterns, is there a similar rule for the division of smaller granularity?
With our observation, if the row is divided into blocks, the sign of each block is related to the first block.
Therefore, the second method for sign computation is named as block-based method.
Except the first block, the signs in a block are decided by the first sign of the block.
If the first sign in one block is positive, it means the signs in this block are the same as the signs of the first block.
On the contrary, if the first sign in this block is negative, which brings the opposite signs with the first block.

Therefore, in each row, all the elements in the first block need to be computed in advance.
Then in the extra blocks, only the first sign needs to be computed.
Similarly, based on the quarter-based method, we know that only quarter signs in the $2^n \times 2^n$ matrix need to be computed with $O(n)$ complexity.
If the number of signs in each block is $b$, then the number of sign computations can be calculated as $2^{n-1} * (\frac{2^{n-1}}{b} + b - 1)$.
If $b = 1$, the number of computations of block-based method is the same as the quarter-based method.
We give two examples in Figure \ref{fig:mvm_sign} (b) for the cases where $b = 2$ and $b = 4$.
The number of sign computation is 9 and 7, respectively.
Note that the number of elements in each block is the power of 2. Otherwise, the elements in a row cannot be divided equally.

\begin{theorem}\label{them:blk}
Let $b$ be the block size of block-based method for sign computation.
When $b = \sqrt{2^{n-1}}$, the method gets the minimum number of sign computation for $n$-qubit Hadamard gates applying.
\end{theorem}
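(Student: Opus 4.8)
The plan is to treat the sign-computation count as a function of a single real variable, minimize it by the AM--GM inequality, and then reconcile the result with the power-of-two constraint on the block size. Write $N = 2^{n}$ and let $f(b) = 2^{n-1}\bigl(\frac{2^{n-1}}{b} + b - 1\bigr)$ denote the number of sign computations claimed in the text for block size $b$. Since $2^{n-1}$ is a fixed positive factor and the trailing $-1$ is a constant offset, minimizing $f$ over $b > 0$ is equivalent to minimizing $h(b) = \frac{2^{n-1}}{b} + b$.

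First I would apply AM--GM to the two positive terms of $h$: $\frac{2^{n-1}}{b} + b \ge 2\sqrt{\frac{2^{n-1}}{b}\cdot b} = 2\sqrt{2^{n-1}}$, with equality if and only if $\frac{2^{n-1}}{b} = b$, i.e. $b = \sqrt{2^{n-1}} = 2^{(n-1)/2}$. Equivalently, $h'(b) = 1 - 2^{n-1}/b^{2}$ vanishes only at $b = \sqrt{2^{n-1}}$, and $h''(b) = 2\cdot 2^{n-1}/b^{3} > 0$ on $(0,\infty)$, so $h$ is strictly convex, strictly decreasing on $(0,\sqrt{2^{n-1}})$ and strictly increasing on $(\sqrt{2^{n-1}},\infty)$; hence $b^{*} = \sqrt{2^{n-1}}$ is its unique global minimizer. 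Substituting back gives the minimum value $f(b^{*}) = 2^{n-1}\bigl(2\cdot 2^{(n-1)/2} - 1\bigr) = 2^{(3n-1)/2} - 2^{n-1}$, and since $1 \le 2^{(n-1)/2} \le 2^{n-1}$ the optimizer always lies in the feasible range $[1, 2^{n-1}]$ for $n \ge 1$.

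It then remains to handle the admissibility constraint noted after the examples, namely that $b$ must be a power of two so that a row of $2^{n-1}$ elements splits into equal blocks. When $n$ is odd, $2^{(n-1)/2}$ is itself an integer power of two dividing $2^{n-1}$, so the continuous optimum is attained at an admissible block size and the theorem follows directly (this matches the $n=5$ example, where $b = \sqrt{2^{4}} = 4$ indeed gives the fewest computations). When $n$ is even I would invoke the strict convexity of $h$ once more: among the admissible values $b \in \{1, 2, 4, \dots, 2^{n-1}\}$ the unique minimizer of $f$ is the one nearest $\sqrt{2^{n-1}}$, i.e. $2^{\lfloor (n-1)/2 \rfloor}$ or $2^{\lceil (n-1)/2 \rceil}$, which is the natural reading of "$b = \sqrt{2^{n-1}}$" in that case.

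I expect the only real subtlety to be this last point — the gap between the clean continuous optimum and the power-of-two restriction on block sizes; the optimization itself is essentially a one-line AM--GM argument, with strict convexity supplying uniqueness and the rounding statement.
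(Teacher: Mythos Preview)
Your proposal is correct. The paper's own proof is shorter and slightly less careful: it defines the per-row count $f(b)=\frac{2^{n-1}}{b}+b-1$, computes $f'(b)=-2^{n-1}/b^{2}+1$, and checks its sign on either side of $b=\sqrt{2^{n-1}}$ --- exactly the first-derivative argument you mention alongside AM--GM. The paper does \emph{not} address the power-of-two admissibility constraint on $b$ at all, so your discussion of the parity of $n$ and rounding to the nearest admissible block size actually goes beyond what the paper proves; in that sense your treatment is more complete, though the core one-variable optimization is essentially the same.
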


\begin{proof}
As long as the number of sign computation in each row is minimal, thus the overall minimum number of sign computation can be obtained.
The number of sign computation in each row is $\frac{2^{n-1}}{b} + b - 1$.
We can formulate the number of sign computation as:
\begin{equation}
\begin{small}
    f(b) = \frac{2^{n-1}}{b} + b - 1,
\end{small}
\end{equation}
where $n$ is a constant.
In order to compute the $b$, which has the minimum value of $f(b)$, we calculate the derivation of the function as {\small $f'(b) = \frac{-2^{n-1}}{b^2} + 1$}.
When {\small $b = \sqrt{2^{n-1}}$}, {\small $f'(b) = 0$}.
Meanwhile, when {\small$b < \sqrt{2^{n-1}}$}, {\small$f'(b) < 0$}; when {\small$b > \sqrt{2^{n-1}}$}, {\small$f'(b) > 0$}.
Therefore, {\small$f(\sqrt{2^{n-1}})$} gets the minimum value.
\end{proof}

\begin{table}[t]
\captionsetup{font={small}}
\setlength{\abovecaptionskip}{5.3pt}
\caption{Summary of the number of sign computation for four methods.}
\def\arraystretch{1.3}\tabcolsep 18pt
\def\thefootnote{a}\footnotesize
\begin{center}
\begin{tabular}{|l|l|}
\hline
\textbf{Methods} & \textbf{Number of sign computation} \\
\hline
non-optimized & $2^n * 2^n$ \\
\hline
quarter-based  & $2^{n-1} * 2^{n-1}$ \\
\hline
block-based & $2 ^ {n-1} * (\frac{2^{n-1}}{b} + b - 1)$ \\
\hline
logarithm-based & $2 ^{n-1} * (n-1)$ \\
\hline
\end{tabular}%
\end{center}
\label{tab:sign_complx}%
\end{table}

\renewcommand{\algorithmicrequire}{\textbf{Input:}}
\renewcommand{\algorithmicensure}{\textbf{Output:}}
\begin{algorithm}[b]
\caption{RH-MVM based on logarithm-based method.}
\label{alg:mvm_sign}
\begin{algorithmic}[1]
\REQUIRE{the current state-vector $S$ of the $n$-qubit quantum circuit and the RH structure of $H^{\otimes n}$.}
\ENSURE{the next state-vector $S_{next}$ for the quantum circuit.}
\STATE scalar multiplication for RH.$value \cdot S$.
\STATE array sign[$2^{n-1}$] for temporarily storing the signs.
\FOR{$r = 0$ to $2^{n-1}-1$}
    \FOR{$c = 0$ to $2^{n-1}-1$}
        \IF{c == 0}
        \STATE sign[c] = 1;
        \ELSIF{c is the power of 2 ($2^k$)}
            \STATE{sign[c] = compute\_sign(r, c, n)} //O(n)
        \ELSE
            \STATE{sign[c] = sign[$\lfloor \frac{c}{2^{p-1}} \rfloor * 2^{p-1}$] * sign[c \% $2^{p-1}$]}
        \ENDIF
        \STATE{$S_{next}$[r] += sign[c] * S[c]}
        \STATE{$S_{next}$[r] += sign[c] * S[c+$2^{n-1}$]} 
       \STATE{$S_{next}$[r+$2^{n-1}$] += sign[c] * S[c]}
       \STATE{$S_{next}$[r+$2^{n-1}$] += -sign[c] * S[c+$2^{n-1}$]}
    \ENDFOR
\ENDFOR
\end{algorithmic}
\end{algorithm}

\textbf{\emph{Method 3: logarithm-based method.}} 
With further consideration, we can reduce the times of sign computation to $log_2{2^{n-1}} = n - 1$ for each row.
We name this method as logarithm-based method.
For each row of the matrix, we only compute the signs, whose indices are the power of the 2.
That's to say, except computing the first sign with index 0, the signs with index {\small $2^0, 2^1, 2^2, ....$} need to be computed.
When the sign with index $2^i$ is computed, the consecutive $2^i$ signs are computed.
If the sign of index $2^i$ is positive, then the following signs are the same as the signs that have been calculated before.
Conversely, if the sign is negative, the next signs are the opposite with the former signs.
Therefore, the number of computations can be calculated as $2^{n-1} * (n - 1)$.
We give an example in Figure \ref{fig:mvm_sign} (c), which only needs to compute 5 signs for each row.
The complexity of these three methods for sign computation compared with non-optimized method are listed in Table \ref{tab:sign_complx}.
Finally, we demonstrate the MVM of $H^{\otimes n}$ and $S$ using logarithm-based method in Algorithm \ref{alg:mvm_sign}.
Line 5-10 is for the sign computation of the first quarter, which is embedded into the RH-MVM algorithm.

\begin{property}
The complexity of RH-MVM algorithm is {\small $O(n \cdot 2^{2(n-1)})$}, where the embedded \emph{Method 3} of sign computation conduct a complexity of {\small $O(log_22^{n-1}) = O(n-1)$}.
\end{property}

\section{Experimental Results}\label{sec:case}

\subsection{Experimental Setup}
In this section, we evaluate our proposed DAX and DAS structures for sparse quantum gates and the RH structure for the dense Hardamard gate compared with the `Matrix' method.
We conduct our optimization methods on Intel(R) Xeon(R) Silver 4114 CPU @ 2.20GHz with 10 CPU cores. Our workstation possesses 0.7 TB memory.
Experiments are performed on GCC 9.3.0.

\subsection{Evaluation for DAX and DAS structures} \label{sec:exp_dax_das}

\begin{figure}[t]
\center
\includegraphics[width=0.9\linewidth]{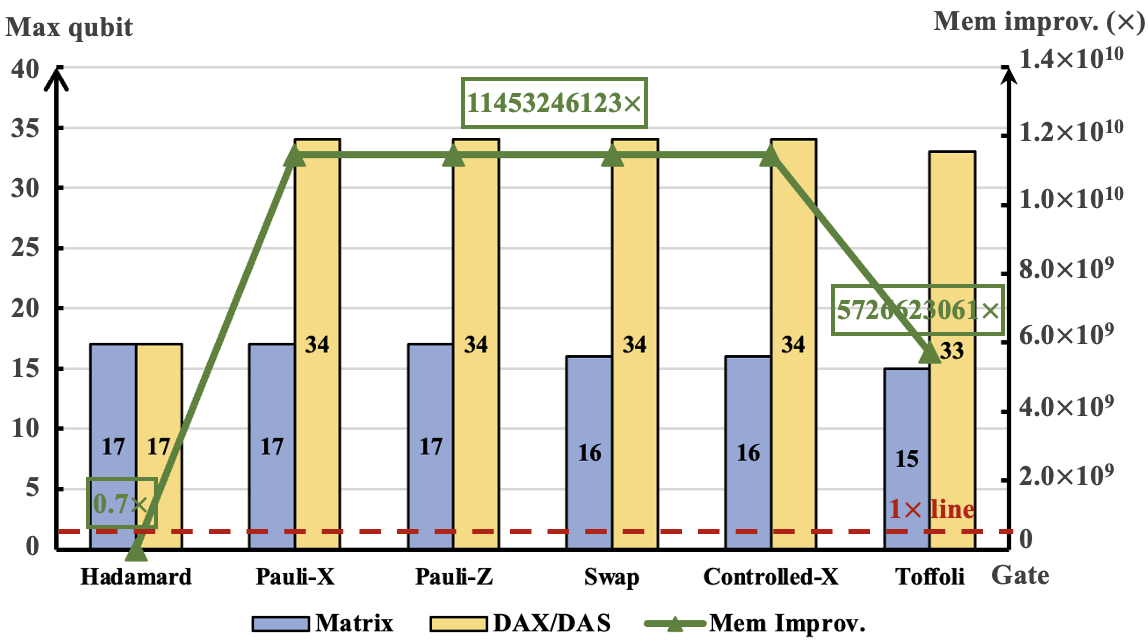}
\captionsetup{font={small}}
\setlength{\abovecaptionskip}{3pt}
\caption{Maximum number of qubits and the memory improvement of DAX and DAS compared with original `Matrix' structure.}
\label{exp:qDAXS}
\end{figure}

In this section, we demonstrate the efficiency of DAX and DAS.
We implement six types of quantum gates including dense Hadamard gate, single-qubit Paul-X and Pauli-Y gates, two-qubit Swap and Controlled-X gates, and the three-qubit Toffoli gate.
First, the maximum number of qubits for $m$ identical gates MTP is demonstrated.
Corresponding memory improvement for the maximum qubit is explored.
Then, the execution time for the MTP and MVM is counted to show the speedup of our method.

Figure \ref{exp:qDAXS} demonstrates the maximum number of qubits and the memory improvement under six types of quantum gates.
We conduct the MTP using $m$ identical quantum gates to approach the maximum number of qubits.
We compare the original `Matrix' structure without any compression with our DAX and DAS.
For the `Matrix' structure, our workstation can accommodate up to 17 qubits for single-qubit gates.
And for the two-qubit and three-qubit gates, $n$ needs to be the multiple of 2 and 3.
Therefore, the maximum number of qubits is 16 and 15 for two-qubit and three-qubit quantum gates, respectively.
Obviously, using our DAX and DAS structures, the maximum number of qubits increases to 34.
And the DAX and DAS structures achieve $1.1 \times 10^{10} \times$ memory saving compared with the original structure.
However, for the dense Hadamard gate, the memory usage can't be saved, which leads that the maximum qubit that can be accommodated is still 17.

\begin{figure}[t]
\center
\includegraphics[width=0.9\linewidth]{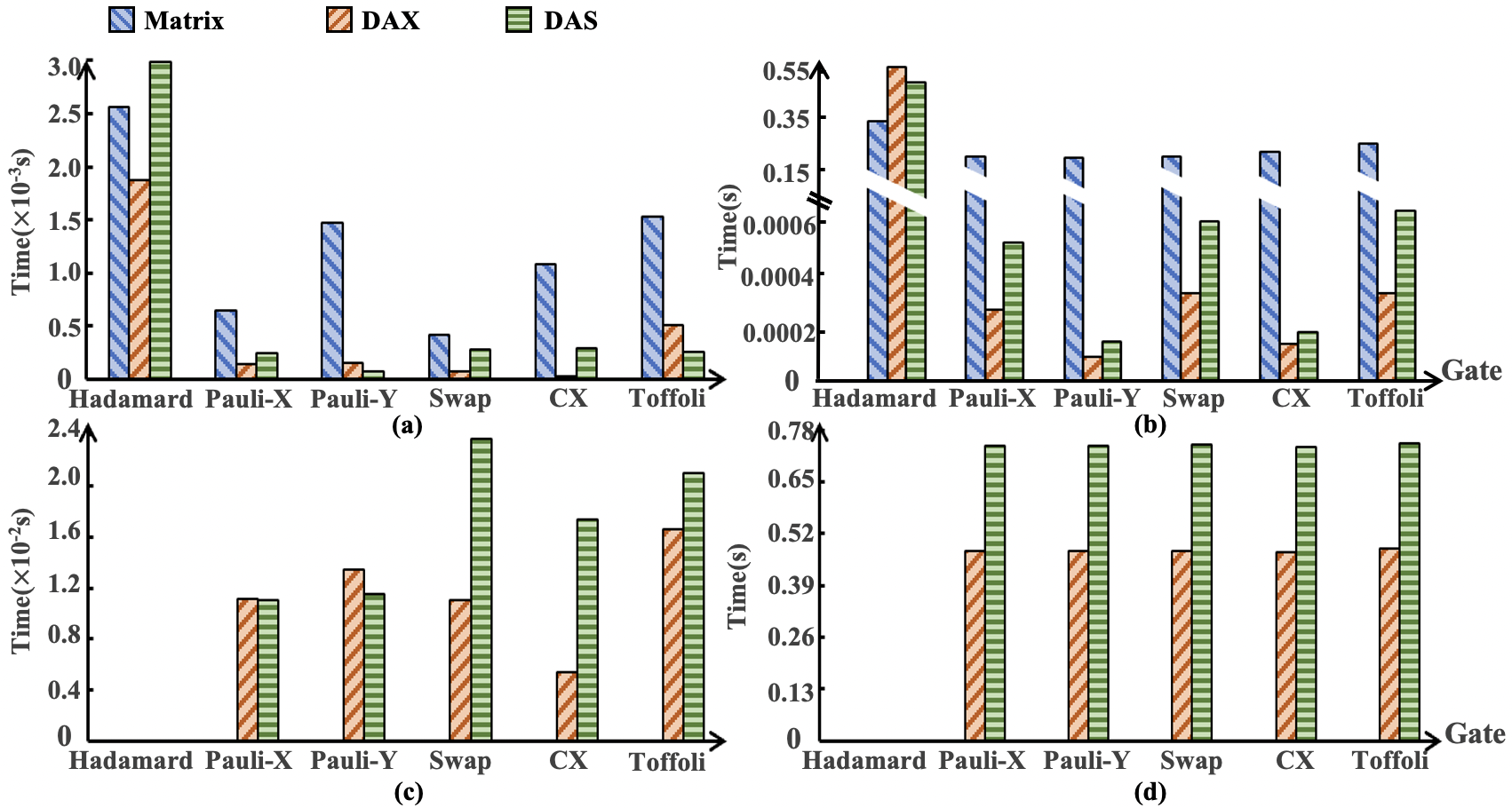}
\captionsetup{font={small}}
\setlength{\abovecaptionskip}{3pt}
\caption{The execution time comparison for MTP computation of Matrix, DAX and DAS. (a) 6-qubit circuit; (b) 12-qubit circuit; (c) 18-qubit circuit; (d) 24-qubit circuit.}
\label{exp:ti_MTP_DAXS}
\end{figure}

\begin{figure}[t]
\center
\includegraphics[width=0.9\linewidth]{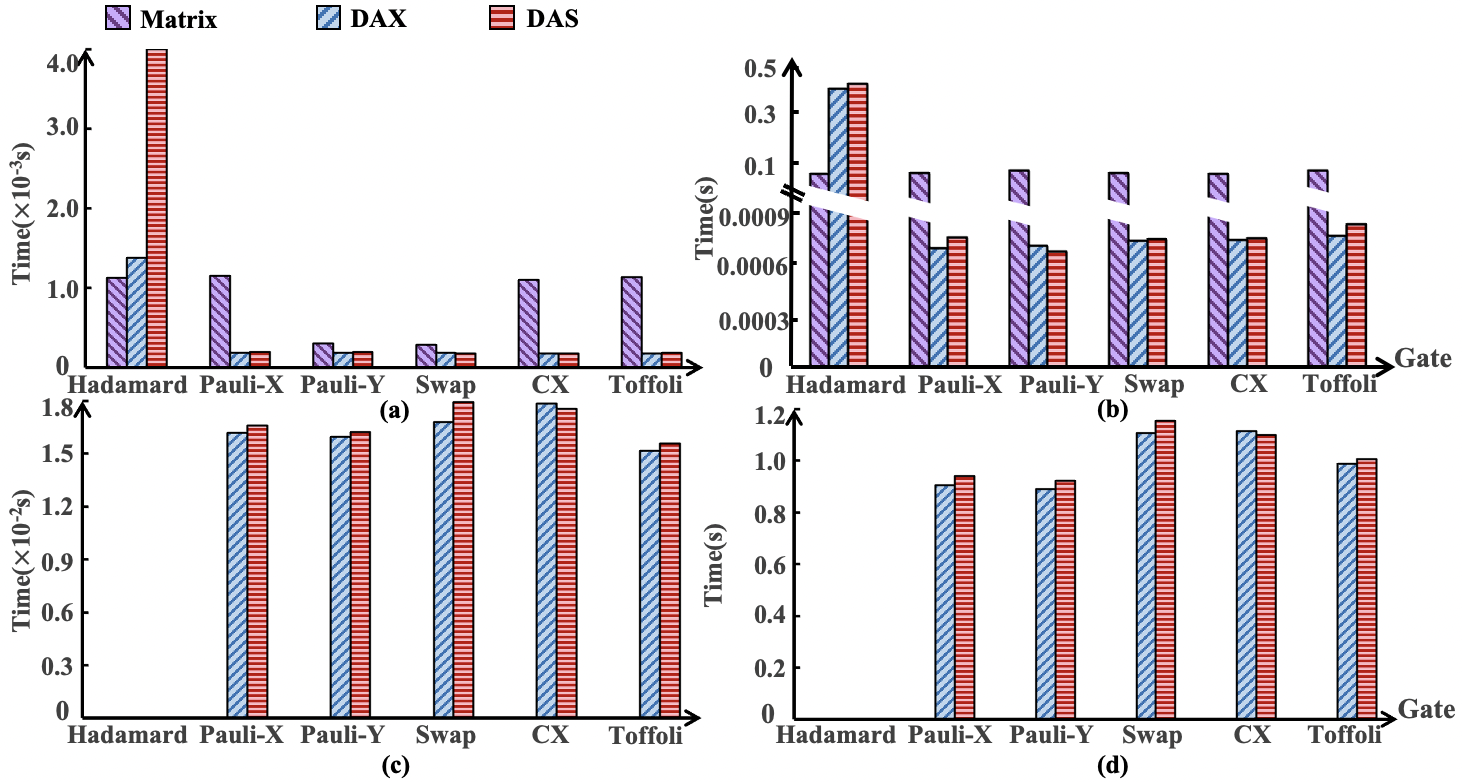}
\captionsetup{font={small}}
\setlength{\abovecaptionskip}{3pt}
\caption{The execution time comparison for MVM computation of Matrix, DAX and DAS. (a) 6-qubit circuit; (b) 12-qubit circuit; (c) 18-qubit circuit; (d) 24-qubit circuit.}
\label{exp:ti_MVM_DAXS}
\end{figure}

Figure \ref{exp:ti_MTP_DAXS} and \ref{exp:ti_MVM_DAXS} demonstrate the execution time comparison for MTP and MVM computations.
We select four quantum circuits with different qubits (i.e., 6, 12, 18, 24) for comparison.
The circuits are also implemented by $m$ identical quantum gates.
For Hadamard gate, because the non-zero elements in DAX and DAS can't be decreased, the execution time can't be accelerated.
And because the DAX and DAS need to parse the index for computation, they need more time to simulate.
Moreover, for the 18-qubit and 24-qubit simulation, the memory can't afford the memory usage.
However, for the other five types of sparse gates, the DAX and DAS can achieve up to 1800.3$\times$ execution speedup for 12-qubit MTP computation.
And for MVM computation, DAX and DAS can achieve up to 141.2$\times$ speedup for 12-qubit quantum circuit.
Obviously, as the qubit increases, DAX structure shows better simulation acceleration than DAS.

\subsection{Evaluation for RH structures}

In this section, we implement the RH of $H^{\otimes n}$ and the MVM of $H^{\otimes n}$ and state-vector $S$ on our workstation with 0.7 TB of memory.
For the MTP of $n$ Hadamard gates, the memory can accommodate any qubits using only 3 microseconds to execute.
For MVM of the $H^{\otimes n}$ and $S$, we implement four sign computation methods, including non-optimized, quarter-based, block-based and the logarithm-based methods.
Correspondingly, the methods coupled with RH structure are marked as `RH\_Noptimized', `RH\_Quarter', `RH\_Block' and `RH\_Logarithm' in Figure \ref{exp:RH}.
For the block-based method, the block size $b$ is set as {\small $\sqrt{2^{n-1}}$}, which is proved in the Theorem \ref{them:blk}.
Meanwhile, the results of the method without RH optimization, marked as `Matrix', are demonstrated for comparison.

For the MVM computation, our workstation can accommodate up to 35 qubits using RH structure.
But the maximum number of qubit is only 17 using original matrix representation. 
Our RH structure achieves over $2^n \times$ memory saving. 
As for the simulation time, we implement above five methods for comparison. Results are demonstrated in Figure \ref{exp:RH}.
The simulation time is demonstrated from qubit 12 to 17.
When the qubit is 18, `RH\_Noptimized' will approach 20 hours for simulation.
We don't involve so long time for comparison, because the time can be estimated with the linear growth.
Results show that our proposed `RH\_Logarithm' method achieves the fastest quantum circuit simulation.
For qubit 17, `RH\_Noptimized' method consumes 4.47 hours for simulation.
However, our `RH\_Logarithm' method only needs 163.3 seconds, which achieves 98.7$\times$ speedup.
Meanwhile, for the original `Matrix' method, our `RH\_Logarithm' method also produces 2.3$\times$ acceleration.

\begin{figure}[t]
\center
\includegraphics[width=1.0\linewidth]{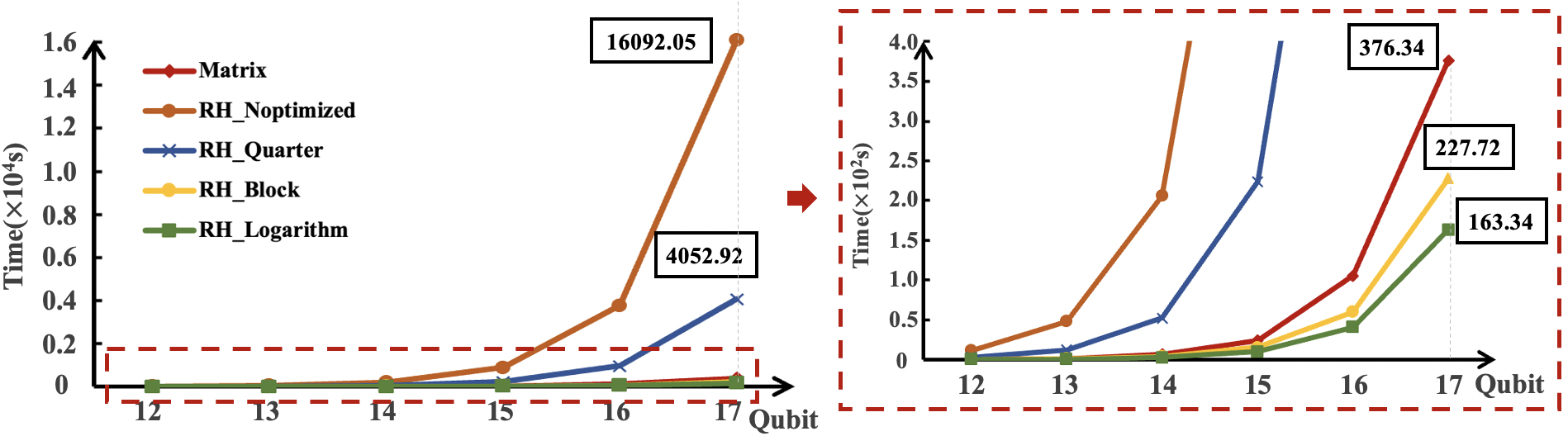}
\captionsetup{font={small}}
\setlength{\abovecaptionskip}{3pt}
\caption{Execution time comparison of MVM computation for dense Hadamard gates.}
\label{exp:RH}
\end{figure}

\begin{table*}[t]
\captionsetup{font={small}}
\setlength{\abovecaptionskip}{3pt}
\caption{Methods comparison for a QNN neuron.}
\def\arraystretch{1.3}\tabcolsep 14pt
\def\thefootnote{a}\footnotesize
\begin{center}
\begin{tabular}{|c|c|ccc|cccc|}
\hline
\multirow{2}[2]{*}{\textbf{Methods}} & \multirow{2}[2]{*}{\textbf{Max Qubit}} & \multicolumn{3}{c|}{\textbf{Memory Usage}} & \multicolumn{4}{c|}{\textbf{Execution Time (s)}} \\
      &       & stage1 & stage2 & stage3 & 4     & 7     & 12    & 21 \\
\hline
Matrix & 12    & 256MB & 256MB & 256MB & 0.000662  & 0.002292  & 2.045140  & - \\
\hline
\multirow{2}[2]{*}{RH-DAX} & \multirow{2}[2]{*}{12} & 256MB & 160KB & 160KB & 0.000363  & 0.001396  & 0.295478  & - \\
      &       & 1.0$\times$ & 1638.4$\times$ & 1638.4$\times$ & 1.8$\times$ & 1.6$\times$ & 6.9$\times$ & - \\
\hline
\end{tabular}%
\end{center}
\label{tab:exp_qnn}%
\end{table*}

\begin{table*}[t]
\captionsetup{font={small}}
\setlength{\abovecaptionskip}{3pt}
\caption{Methods comparison for Grover algorithm.}
\def\arraystretch{1.3}\tabcolsep 8.53pt
\def\thefootnote{a}\footnotesize
\begin{center}
\begin{tabular}{|c|c|ccc|cccccc|}
\hline
\multirow{2}[2]{*}{\textbf{Methods}} & \multirow{2}[2]{*}{\textbf{Max Qubits}} & \multicolumn{3}{c|}{\textbf{Memory Usage}} & \multicolumn{6}{c|}{\textbf{Execution Time}} \\
      &       & $H^{\otimes n}$ & Oracle & $Z_0$  & 3     & 6     & 9     & 12    & 15    & 17 \\
\hline
Matrix & 17    & 256GB & 256GB & 256GB & 0.000177  & 0.008643  & 0.398444  & 73.15  & 12310.37  & 347099.60  \\
\hline
\multirow{2}[2]{*}{RH-DAX} & \multirow{2}[2]{*}{34} & \multirow{2}[2]{*}{256GB} & \multirow{2}[2]{*}{0.6TB} & \multirow{2}[2]{*}{0.6TB} & 0.000044  & 0.002235  & 0.088259  & 16.04  & 2907.44  & 92256.08  \\
      &       &       &       &       & 4.0 $\times$ & 3.9 $\times$ & 4.5 $\times$ & 4.6 $\times$ & 4.2 $\times$ & 3.8 $\times$ \\
\hline
\end{tabular}%
\end{center}
\label{tab:exp_grover}%
\end{table*}

\subsection{Evaluation on QNN neuron and Grover algorithm}

In this section, we implement two quantum algorithms, a QNN neuron from work \cite{jiang2021co} and the Grover algorithm, for a case study to show the efficiency of our proposed methods.
Figure \ref{fig:qnn} shows the quantum circuit of the QNN neuron, which includes data encoding (stage 1), dot-product of inputs and binary weight (stage 2), encoding the QNN results to qubit $O$ (stage 3), and measurement (stage 4).
The proposed compressed structure for sparse quantum gates and customized RH structure for dense Hadamard gates are implemented for comparison. 
Results of the maximum number of qubits, memory usage, execution time and corresponding improvement are demonstrated in Table \ref{tab:exp_qnn} and \ref{tab:exp_grover}.
For the `Matrix' method, all the data are represented using original matrix without any optimizations.
The duplicated zeros and regular Hadamard gates are not compressed.
Then, for `RH-DAX' method, because two algorithms rely on various types of quantum gates, including the sparse gates and dense Hadamard gate, corresponding optimizations are implemented.
For Hadamard gate, the customized RH is implemented using logarithm-based MVM method.
For the sparse gates, the DAX is utilized.
Because, in previous experiments, the DAX structure demonstrates better performance than DAS.

\begin{figure}[t]
\center
\includegraphics[width=0.9\linewidth]{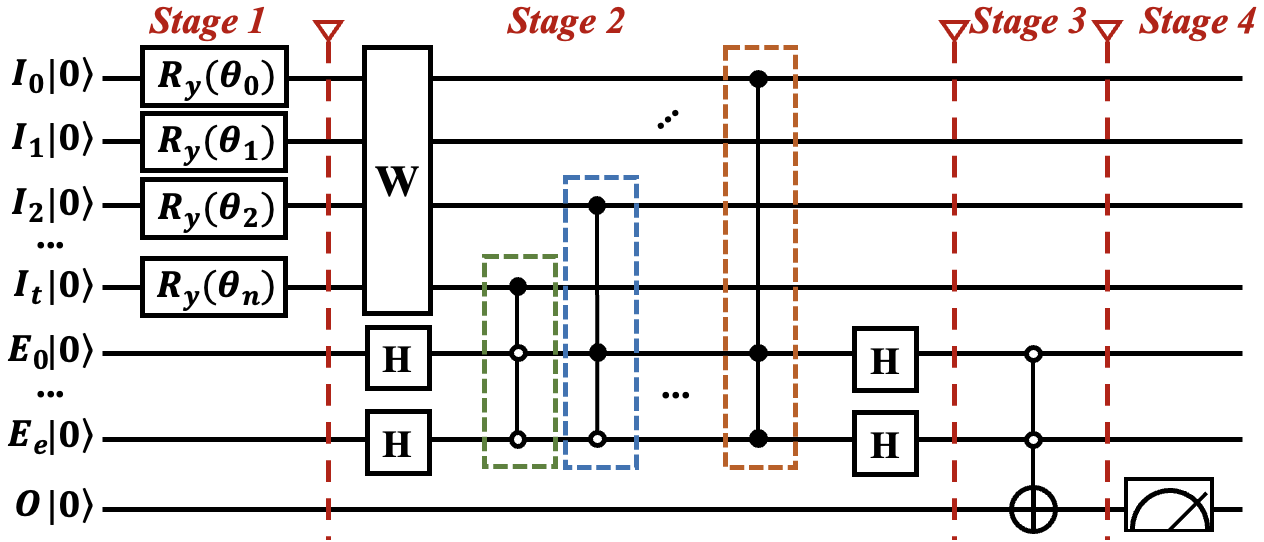}
\captionsetup{font={small}}
\setlength{\abovecaptionskip}{3pt}
\caption{The quantum circuit of a QNN neuron \cite{jiang2021co}.}
\label{fig:qnn}
\end{figure}

The results of different methods for QNN neuron are demonstrated in Table \ref{tab:exp_qnn}.
For the `Matrix' method without optimizations, our workstation can only simulate 12 qubits, which includes 8 inputs-qubits, 3 encode-qubits and 1 output-qubit.
For each stage, the memory needs to accommodate $2^n \times 2^n$ operation-matrix and $2^n$ state-vector.
Therefore, they all require about 256 MB memory.
However, the 256 MB is far from the 0.7 TB memory of our workstation.
This is because the number of inputs for a quantum neuron needs to be the power of 2.
For 16 inputs, it requires 21 qubits for simulation, the memory requirement far exceeds the memory boundary.

For the `RH-DAX' method, the specific Hadamard gates are implemented using our proposed RH structure.
And for other sparse gates, we utilize the DAX structure for data compression.
Because the data encoding stage of the neuron applies dense RY gates, which can't save the memory, it also needs 256 MB.
However, for stage 2 and stage 3, it requires only 160 KB memory because of the compressed DAX.
As we can see, even if the hybrid RH and DAX method achieves up to 1638.4$\times$ memory saving for stage 2 and stage 3, the maximum number of qubits for this method is still 12.
Because the dense RY gates in stage 1 become the bottleneck for qubits increasing.
This problem will be considered in future work.
Finally, we discuss the execution time for different methods under different numbers of qubits.
The speedup listed in the table is compared with `Matrix' method.
The `RH-DAX' method achieves up to 6.9$\times$ acceleration compared with the `Matrix' method, when the number of qubits is 12.


For Grover algorithm, we compare our proposed hybrid `RH-DAX' method with the original representation using a matrix structure.
Similarly, the maximum number of qubits, memory usage, execution time and corresponding improvement compared with `Matrix' method are demonstrated in Table \ref{tab:exp_grover}.
Specially, Grover algorithm supports consecutive qubits for simulation.
An $n$-qubit quantum circuit means the database possesses $N = 2^n$ elements for searching.
For original `Matrix' method, our workstation can accommodate 17 qubits.
However, our hybrid `RH-DAX' method can achieve up to 34-qubit simulation.
Based on the type of quantum gates, we classify the circuit into three parts: $H^{\otimes n}$, oracle and $Z_0$ transformation.
For each part, we list the maximum memory usage for the maximum qubits.
For the execution time, we list six types of qubits up to 17 qubits for comparison, because the `Matrix' method can only accommodate 17 qubits.
Our hybrid `RH-DAX' method achieves up to 4.6$\times$ simulation speedup than `Matrix' method. For these six qubits, the speedup is 4.2$\times$ on average, which is a stable acceleration.

\section{Conclusion}\label{sec:conclusion}

In order to accommodate more qubits and accelerate quantum circuit simulation on personal computing capability, this paper proposes two-level optimizations via redundancy exploration.
First, based on a large number of sparse quantum gates, we propose two compressed structures for duplicated zeros compression.
The design doesn't require repeat compression and decompression, it can seamlessly update the state-vector.
This method increases the number of qubits from 17 to 34 and achieves up to 1800.3$\times$ and 141.2$\times$ acceleration for MTP and MVM operations.
Second, for the dense and definitely used Hadamard gate, we design an RH structure by exploring the regularity of Hadamard applying.
The structure is parsed on demand, and the state-vector updating process is accelerated by 2.3$\times$.
Finally, experiments are conducted on actual quantum algorithms. The mixed DAX and RH method achieves up to 6.9$\times$ and 4.6$\times$ speedup for the QNN and Grover algorithm, respectively.

\bibliographystyle{ieeetr}
\bibliography{ref}

\begin{thebibliography}{10}

\bibitem{boixo2018characterizing}
S.~Boixo, S.~V. Isakov, V.~N. Smelyanskiy, R.~Babbush, N.~Ding, Z.~Jiang, M.~J. Bremner, J.~M. Martinis, and H.~Neven, ``Characterizing quantum supremacy in near-term devices,'' {\em Nature Physics}, vol.~14, no.~6, pp.~595--600, 2018.

\bibitem{shor1999polynomial}
P.~W. Shor, ``Polynomial-time algorithms for prime factorization and discrete logarithms on a quantum computer,'' {\em SIAM review}, vol.~41, no.~2, pp.~303--332, 1999.

\bibitem{grover1996fast}
L.~K. Grover, ``A fast quantum mechanical algorithm for database search,'' in {\em Proceedings of the twenty-eighth annual ACM symposium on Theory of computing}, pp.~212--219, 1996.

\bibitem{jiang2021co}
W.~Jiang, J.~Xiong, and Y.~Shi, ``A co-design framework of neural networks and quantum circuits towards quantum advantage,'' {\em Nature communications}, vol.~12, no.~1, pp.~1--13, 2021.

\bibitem{hu2022quantum}
Z.~Hu, P.~Dong, Z.~Wang, Y.~Lin, Y.~Wang, and W.~Jiang, ``Quantum neural network compression,'' in {\em Proceedings of the 41st IEEE/ACM International Conference on Computer-Aided Design}, pp.~1--9, 2022.

\bibitem{simulator_list}
``List of qc simulators.'' \url{https://www.quantiki.org/wiki/list-qc-simulators}, 2022.

\bibitem{de2007massively}
K.~De~Raedt, K.~Michielsen, H.~De~Raedt, B.~Trieu, G.~Arnold, M.~Richter, T.~Lippert, H.~Watanabe, and N.~Ito, ``Massively parallel quantum computer simulator,'' {\em Computer Physics Communications}, vol.~176, no.~2, pp.~121--136, 2007.

\bibitem{haner20175}
T.~H{\"a}ner and D.~S. Steiger, ``5 petabyte simulation of a 45-qubit quantum circuit,'' in {\em Proceedings of the International Conference for High Performance Computing, Networking, Storage and Analysis}, pp.~1--10, 2017.

\bibitem{li2019quantum}
R.~Li, B.~Wu, M.~Ying, X.~Sun, and G.~Yang, ``Quantum supremacy circuit simulation on sunway taihulight,'' {\em IEEE Transactions on Parallel and Distributed Systems}, vol.~31, no.~4, pp.~805--816, 2019.

\bibitem{top500}
S.~Erich, D.~Jack, S.~Horst, M.~Martin, and H.~Meuer, ``Top500 list.'' \url{https://www.top500.org/lists/top500/2021/11/}, 2021.

\bibitem{smelyanskiy2016qhipster}
M.~Smelyanskiy, N.~P. Sawaya, and A.~Aspuru-Guzik, ``qhipster: The quantum high performance software testing environment,'' {\em arXiv preprint arXiv:1601.07195}, 2016.

\bibitem{khammassi2017qx}
N.~Khammassi, I.~Ashraf, X.~Fu, C.~G. Almudever, and K.~Bertels, ``Qx: A high-performance quantum computer simulation platform,'' in {\em Design, Automation \& Test in Europe Conference \& Exhibition (DATE), 2017}, pp.~464--469, IEEE, 2017.

\bibitem{bernstein1997quantum}
E.~Bernstein and U.~Vazirani, ``Quantum complexity theory,'' {\em SIAM Journal on computing}, vol.~26, no.~5, pp.~1411--1473, 1997.

\bibitem{boixo2017simulation}
S.~Boixo, S.~V. Isakov, V.~N. Smelyanskiy, and H.~Neven, ``Simulation of low-depth quantum circuits as complex undirected graphical models,'' {\em arXiv preprint arXiv:1712.05384}, 2017.

\bibitem{pednault2017breaking}
E.~Pednault, J.~A. Gunnels, G.~Nannicini, L.~Horesh, T.~Magerlein, E.~Solomonik, and R.~Wisnieff, ``Breaking the 49-qubit barrier in the simulation of quantum circuits,'' {\em arXiv preprint arXiv:1710.05867}, vol.~15, 2017.

\bibitem{mccaskey2018validating}
A.~McCaskey, E.~Dumitrescu, M.~Chen, D.~Lyakh, and T.~Humble, ``Validating quantum-classical programming models with tensor network simulations,'' {\em PloS one}, vol.~13, no.~12, p.~e0206704, 2018.

\bibitem{chen2018classical}
J.~Chen, F.~Zhang, C.~Huang, M.~Newman, and Y.~Shi, ``Classical simulation of intermediate-size quantum circuits,'' {\em arXiv preprint arXiv:1805.01450}, 2018.

\bibitem{chen201864}
Z.-Y. Chen, Q.~Zhou, C.~Xue, X.~Yang, G.-C. Guo, and G.-P. Guo, ``64-qubit quantum circuit simulation,'' {\em Science Bulletin}, vol.~63, no.~15, pp.~964--971, 2018.

\bibitem{barends2014superconducting}
R.~Barends, J.~Kelly, A.~Megrant, A.~Veitia, D.~Sank, E.~Jeffrey, T.~C. White, J.~Mutus, A.~G. Fowler, B.~Campbell, {\em et~al.}, ``Superconducting quantum circuits at the surface code threshold for fault tolerance,'' {\em Nature}, vol.~508, no.~7497, pp.~500--503, 2014.

\bibitem{huang2019statistical}
Y.~Huang and M.~Martonosi, ``Statistical assertions for validating patterns and finding bugs in quantum programs,'' in {\em Proceedings of the 46th International Symposium on Computer Architecture}, pp.~541--553, 2019.

\bibitem{wu2019full}
X.-C. Wu, S.~Di, E.~M. Dasgupta, F.~Cappello, H.~Finkel, Y.~Alexeev, and F.~T. Chong, ``Full-state quantum circuit simulation by using data compression,'' in {\em Proceedings of the International Conference for High Performance Computing, Networking, Storage and Analysis}, pp.~1--24, 2019.

\bibitem{huffman1952method}
D.~A. Huffman, ``A method for the construction of minimum-redundancy codes,'' {\em Proceedings of the IRE}, vol.~40, no.~9, pp.~1098--1101, 1952.

\bibitem{witten1987arithmetic}
I.~H. Witten, R.~M. Neal, and J.~G. Cleary, ``Arithmetic coding for data compression,'' {\em Communications of the ACM}, vol.~30, no.~6, pp.~520--540, 1987.

\bibitem{liang2018error}
X.~Liang, S.~Di, D.~Tao, S.~Li, S.~Li, H.~Guo, Z.~Chen, and F.~Cappello, ``Error-controlled lossy compression optimized for high compression ratios of scientific datasets,'' in {\em 2018 IEEE International Conference on Big Data (Big Data)}, pp.~438--447, IEEE, 2018.

\bibitem{sasaki2015exploration}
N.~Sasaki, K.~Sato, T.~Endo, and S.~Matsuoka, ``Exploration of lossy compression for application-level checkpoint/restart,'' in {\em 2015 IEEE International Parallel and Distributed Processing Symposium}, pp.~914--922, IEEE, 2015.

\end{thebibliography}

\end{document}